\documentclass{article}

\usepackage[letterpaper,top=2cm,bottom=2cm,left=3cm,right=3cm,marginparwidth=1.75cm]{geometry}

\usepackage{graphicx}
\usepackage{makeidx}
\usepackage{amsmath}

\usepackage{amsthm}
\usepackage{amssymb}
\usepackage{amsfonts}
\usepackage{calligra}
\usepackage{subcaption}
\usepackage{multicol}
\usepackage{multirow}
\usepackage{algorithm}
\usepackage{algorithmic}
\usepackage{makecell}
\usepackage{pifont}

\newtheorem{theorem}{Theorem}[section]

\newtheorem{definition}{Definition}[section]

\newcommand\keywords[1]{\textbf{Keywords}: #1}

\usepackage[hidelinks]{hyperref}

\begin{document}
\title{Competitive Analysis for Online Fair Division under Multiple Fairness Notions \\ (Extended Abstract)}
\author{Tianqi CHEN\thanks{School of Mathematical Science, Zhejiang University, Hangzhou 310027, P. R. China.} \and Zhiyi TAN\thanks{Corresponding author. School of Mathematical Science, Zhejiang University,
		Hangzhou 310027, P. R. China. Supported by the National Natural
		Science Foundation of China (12571340). {\tt
			tanzy@zju.edu.cn}} }
\date{}
\maketitle

\maketitle

\begin{abstract}
We study the online fair division of indivisible items with additive utilities, where items arrive sequentially and must be irrevocably allocated upon arrival. Considering various fairness notions, we focus on designing online algorithms that produce fair or approximately fair allocations for any instance.
We measure algorithm performance using the competitive ratio, defined as the worst-case ratio between the fairness guarantee achieved by the online algorithm and that of an optimal offline allocation with full knowledge of future arrivals.
We examine a broad spectrum of models, including the allocation of goods or chores, normalized versus non-normalized utilities, and identical versus general utility functions. We address the majority of the unresolved cases by providing online algorithms or proving the limits of the competitive ratio achievable by online algorithms. In most cases, the algorithms are shown to be optimal. 
\end{abstract}

\keywords{Fair division, Online algorithms, Competitive ratio, Scheduling}

\section{Introduction}

Fair division studies how to allocate a set of $m$ indivisible items among $n$ agents in a fair manner.
Each agent is endowed with a utility function over subsets of items, and they all wish for the allocation to be relatively fair. A wide range of fairness notions has been proposed to evaluate allocations, including envy-freeness (EF)~\cite{Foley66}, proportionality (PROP)~\cite{Steihaus48}, and maximin share (MMS)~\cite{PW14}, as well as their relaxations, such as EF1 \cite{Budish11}, EFX \cite{CKMPSW19}, PROP1 \cite{CFS17}, and PROPX \cite{Moulin19}, among others.

Exact fairness guarantees are often unattainable with indivisible items, even for relaxations of classical fairness notions.  As a result, researchers have shifted their focus toward approximate fair division and established metrics quantifying the closeness to exact fairness~\cite{ABM18}.

While the majority of existing literature addresses fair division in the \emph{offline} setting, where all items are known in advance, many real-world applications involve items arriving sequentially and requiring immediate, irrevocable allocation decisions. This motivates the study of \emph{online fair division}, which has been actively investigated in recent years~\cite{ZBW23, NPT25, STW25}.

{\bf Our Contribution} In this paper, we study online fair division of indivisible items with additive utilities. Items arrive one by one. When allocating an item, only the utilities of the arrived items for all agents are known, while the utilities of future items are unknown. Depending on the nature of the items, we distinguish between \emph{goods}, which yield positive utilities (values), and \emph{chores}, which incur negative utilities (costs). We further consider both \emph{identical} utilities, where all agents share the same utility function, and \emph{general} utilities, where agents may have heterogeneous preferences. To better understand the role of information in online fair division, we consider the normalized setting, in which the total utility of all items is known in advance. If the total utility is not known a priori, the setting is referred to as non-normalized.

We design online algorithms to allocate items and measure their performance via the \emph{competitive ratio}, defined as the worst-case multiplicative gap between the fairness guarantee achieved by the online algorithm and that of an optimal offline allocation with full knowledge of future arrivals. This differs from some prior work on online fair division (e.g.,~\cite{ZBW23,NPT25}), which evaluates fairness guarantees without normalizing by the optimal offline benchmark.

To provide a more in-depth analysis of the properties of online fair division and their connections to other problems, we establish several structural properties of the problem, focusing on settings with identical utility functions or two agents. 
In particular, we clarify the connection between fair division and parallel machine scheduling, formalize the relationship between goods and chores allocation, and derive links between different fairness notions such as EF and PROP. 
These results provide a unifying perspective and serve as a foundation for our subsequent analysis.

Our main results are summarized in the following two tables, Table~\ref{tab:goods_results} and Table~\ref{tab:chores_results}. For goods allocation, we design online algorithms and derive tight competitive ratio bounds under various fairness notions, achieving optimal guarantees in all considered cases, with the exception of PROP1 under general utilities which remains largely unexplored. For chores allocation, we obtain competitive ratio bounds that are optimal in most cases. Compared to the goods setting, chores exhibit both similarities and notable differences across fairness notions. 
For instance, under EF, EF1, and EFX, the competitive ratios in goods and chores allocation are closely related and often reciprocal. 
On the other hand, chores can be strictly easier in certain cases. Under identical utilities, PROPX can be achieved for any number of agents $n$ in the normalized setting, whereas in the goods allocation this is only possible when $n=2$. 
At the same time, several cases for chores remain open or not fully resolved, particularly under general utilities and for fairness notions such as PROP and PROPX.

\begin{table}[t]
	\renewcommand{\arraystretch}{1.5} 
	\centering
	\begin{tabular}{|l|c|c|cc|cc|}
		\hline \multicolumn{1}{|c|}{\multirow{3}{*}{\makecell{Fairness\\ notions}}} 
		&
		\multicolumn{2}{c|}{Non-normalized} & 
		\multicolumn{4}{c|}{Normalized}
		\\
		\cline{2-7}
		& \multirow{2}{*}{Identical}
		& \multirow{2}{*}{General}
		& \multicolumn{2}{c|}{Identical} 
		& \multicolumn{2}{c|}{General}  \\ \cline{4-7}
		\multicolumn{1}{|c|}{}   
		& 
		&                                  
		& \multicolumn{1}{c|}{$n=2$} 
		& $n\geq 3$ 
		& \multicolumn{1}{c|}{$n=2$} 
		& $n\geq 3$ \\ \hline
		
		EF  & $\frac{1}{n+1}^{(*)}$
		& $0^{(*)}$
		& \multicolumn{2}{c|}{$\frac{1}{n}^{(*)}$}
		& \multicolumn{1}{c|}{$\frac{1}{3}^{(*)}$} 
		& $0^{(*)}$ \\ \hline
		
		EFX & $0^{\text{\cite{NPT25}}}$
		& $0^{\text{\cite{NPT25}}}$
		& \multicolumn{1}{c|}{$\frac{\sqrt{5}-1}{2}^{\text{\cite{NPT25,MP25}}}$} 
		& $0^{\text{\cite{NPT25,MP25}}}$
		& \multicolumn{2}{c|}{$0^{\text{\cite{NPT25,MP25}}}$}  
		\\ \hline
		
		EF1 & $1^{\text{\cite{ELL25}}}$
		& $0^{\text{\cite{NPT25}}}$
		& \multicolumn{2}{c|}{$1^{\text{\cite{ELL25}}}$} 
		& \multicolumn{1}{c|}{$1^{\text{\cite{NPT25}}}$} 
		& $0^{\text{\cite{NPT25}}}$
		\\ \hline
		
		PROP & $\frac1n^{\text{\cite{Woeginger97}}}$
		& $0^{(*)}$
		& \multicolumn{1}{c|}{$\frac{2}{3}^{\text{\cite{KKST97}}}$} 
		& $\frac{1}{n-1}^{\text{\cite{TW07}}}$
		& \multicolumn{1}{c|}{$\frac{1}{2}^{(*)}$} 
		& $0^{(*)}$
		\\ \hline
		
		PROPX & $0^{(*)}$
		& $0^{(*)}$
		& \multicolumn{1}{c|}{$1^{(*)}$} 
		& $0^{(*)}$
		& \multicolumn{2}{c|}{$0^{(*)}$}
		\\ \hline
		
		PROP1 & $1^{\text{\cite{ELL25}}}$
		& $<1 ^{\text{\cite{BKP18}}}$
		& \multicolumn{2}{c|}{$1^{\text{\cite{ELL25}}}$ }
		& \multicolumn{2}{c|}{$1^{\text{\cite{NPT25}}}$} 
		\\ \hline
		
		MMS & $\frac1n^{\text{\cite{Woeginger97}}}$
		& $0^{\text{\cite{ZBW23}}}$
		& \multicolumn{1}{c|}{$\frac{2}{3}^{\text{\cite{KKST97}}}$} 
		& $\frac{1}{n-1}^{\text{\cite{TW07}}}$
		& \multicolumn{1}{c|}{$\frac{1}{2}^{\text{\cite{ZBW23}}}$} 
		& $0^{\text{\cite{ZBW23}}}$ 
		\\ \hline
		
	\end{tabular}
	
		\caption{Competitive ratios of online algorithms for goods allocation under various fairness notions.
			Entries superscripted by $(*)$ refer to results proved in this paper,
			while entries superscripted by $[\cdot]$ refer to prior work.
		}
		\label{tab:goods_results}
	\end{table}
	
	\begin{table}[!htbp]
		\renewcommand{\arraystretch}{1.5} 
		\centering
		\begin{tabular}{|l|c|c|cc|cc|}
			\hline \multicolumn{1}{|c|}{\multirow{3}{*}{\makecell{Fairness\\ notions}}}
			&
			\multicolumn{2}{c|}{Non-normalized}
			& 
			\multicolumn{4}{c|}{Normalized}
			\\
			\cline{2-7}
			& \multirow{2}{*}{Identical}
			& \multirow{2}{*}{General} 
			& \multicolumn{2}{c|}{Identical} 
			& \multicolumn{2}{c|}{General} \\ \cline{4-7}
			\multicolumn{1}{|c|}{}       
			& 
			& 
			& \multicolumn{1}{c|}{$n=2$} 
			& $n\geq 3$ 
			& \multicolumn{1}{c|}{$n=2$} 
			& $n\geq 3$ \\ \hline
			
			EF  & $n+1^{(*)}$
			& $\infty^{(*)}$
			& \multicolumn{2}{c|}{$n^{(*)}$ } 
			& \multicolumn{1}{c|}{$3^{(*)}$ } 
			& $\infty^{(*)}$  \\ \hline
			
			EFX & $\infty^{(\ref{thm:goods-chores-reciprocal}), \ \text{\cite{NPT25}}}$ 
			& $\infty^{(*)}$
			& \multicolumn{1}{c|}{$\frac{\sqrt{5}+1}{2}^{(\ref{thm:goods-chores-reciprocal}), \ \text{\cite{NPT25,MP25}}}$ } 
			& $\infty^{(\ref{thm:goods-chores-reciprocal}), \ \text{\cite{NPT25,MP25}}}$
			& \multicolumn{2}{c|}{$\infty^{(*)}$ }
			\\ \hline
			
			EF1 & $1^{(\ref{thm:goods-chores-reciprocal}), \ \text{\cite{ELL25}}}$
			& $\infty^{(\ref{thm:chores_online_general_ef}, \ \ref{thm:chores_semi-online_general_ef1_ngeq3})}$
			& \multicolumn{2}{c|}{$1^{(\ref{thm:goods-chores-reciprocal}), \ \text{\cite{ELL25}}}$} 
			& \multicolumn{1}{c|}{$1^{(*)}$} 
			& $\infty^{(*)}$
			\\ \hline
			
			PROP & $[1.85358^{\text{\cite{GRT00}}}, 1.9201^{\text{\cite{FW00}}}]$ 
			& $n^{(*)}$
			& \multicolumn{1}{c|}{$\frac{4}{3}^{\text{\cite{KKST97}}}$ } 
			& $1.585^{\text{\cite{AH12,KKG15}}}$ 
			& \multicolumn{1}{c|}{$\frac{3}{2}^{(*)}$} 
			& $[2,n]^{(*)}$
			\\ \hline
			
			PROPX & $n^{(*)}$
			& $n^{(*)}$
			& \multicolumn{2}{c|}{$1^{(*)}$} 
			& \multicolumn{2}{c|}{$[\frac{n}{n-1},n]^{(*)}$}  
			\\ \hline
			
			PROP1 & $1^{(\ref{thm:goods-chores-reciprocal}), \ \text{\cite{ELL25}}}$ 
			& $>1 ^{\text{\cite{BKP18}}}$
			& \multicolumn{2}{c|}{$1^{(\ref{thm:goods-chores-reciprocal}), \ \text{\cite{ELL25}}}$} 
			& \multicolumn{2}{c|}{$1^{(*)}$} 
			\\ \hline
			
			MMS & $[1.85358^{\text{\cite{GRT00}}}, 1.9201^{\text{\cite{FW00}}}]$ 
			& $n^{\text{\cite{STW25}}}$ 
			& \multicolumn{1}{c|}{${\frac43}^{\text{\cite{KKST97}}}$ }
			& $1.585^{\text{\cite{AH12,KKG15}}}$ 
			& \multicolumn{1}{c|}{$[\frac{15}{11},\sqrt{2}]^{\text{\cite{ZBW23}}}$} 
			& $[1.585^{\text{\cite{AH12}}},2-\frac{1}{n}^{\text{\cite{ZBW23}}} ]$ 
			\\ \hline
			
		\end{tabular}
		
		\caption{Competitive ratios of online algorithms for chores allocation under various fairness notions.
			Entries that are not tight are shown in brackets, indicating the best known lower and upper bounds.
			Results enclosed in parentheses $(*)$ are established in this paper; results enclosed in square brackets $[\cdot]$ are from prior work.
		}
		\label{tab:chores_results}
	\end{table}

{\bf Related Works} Early work on online fair division primarily focused on qualitative fairness properties or efficiency objectives, rather than on approximating fairness guarantee.
Aleksandrov et al.~\cite{AAG15} were among the first to study online fair division, proposing a model motivated by food bank allocation that emphasizes axiomatic fairness properties such as envy-freeness and strategy-proofness, together with efficiency measures like competitive ratio and the price of anarchy.
Benade et al.~\cite{BKP18} considered online fair division with indivisible goods under
sequential arrivals and evaluated performance via the maximum envy accumulated
over time, designing algorithms with asymptotically vanishing envy.
Relatedly, Banerjee et al.~\cite{BGG22} studied the online allocation of sequentially arriving $T$ divisible goods with the social objectives of maxmin and Nash social welfare, respectively, assuming that each agent’s total utility over all goods is predictable.
For a broader overview of early work on online fair division, we refer the
reader to the survey~\cite{AW20}.

A more recent line of work studies online fair division through the lens of
approximate fairness guarantees and competitive analysis.
This approach was initiated by Zhou et al.~\cite{ZBW23}, who analyzed online fair division with respect to the MMS guarantee.
Building on this line of research, Neoh et al.~\cite{NPT25} established tight competitive ratios of goods for several fairness notions, including EF1, EFX, and PROP1.
In the context of chores, Song et al.~\cite{STW25} derived
improved lower bounds for MMS guarantees.

Several recent studies have explored extensions and variations of the aforementioned models. For instance, Choo et al. \cite{CFKNPT26} investigated online goods allocation where the maximum value of any item for each agent is known in advance. 
They proved that no online algorithm can achieve a strictly positive approximation guarantee with respect to EF1, PROPX, or MMS. In contrast, for PROP1, an algorithm achieving an approximation guarantee of at least $\frac{1}{n}$ exists. Furthermore, even when the known maximum value is only a prediction, the resulting approximation guarantee can be expressed as a function of the prediction error. 
They also studied the performance of randomized algorithms under the PROP1 notion.
Melissourgos and Protopapa~\cite{MP25} investigated online EFX allocations incorporating predictions of future item values. Wang and Wei~\cite{WW26} addressed the online allocation of goods or chores under personalized bi-valued utilities, which are not necessarily additive. Furthermore, Amanatidis et al.~\cite{ALM25} examined online goods allocation with personalized bi-valued utilities, evaluating the approximate fairness guarantee based on the worst value achieved at any point during the sequential allocation process.

Fair division is closely related to scheduling problems,
especially under additive and identical utilities.
Motivated by this connection, we briefly review relevant results
from online and semi-online scheduling.

Online scheduling on parallel machines has been extensively studied,
with two fundamental objective functions:
minimizing the maximum load and maximizing the minimum load.
The former objective, commonly known as minimizing the makespan,
was initiated by Graham~\cite{Graham66}, who designed the famous List Scheduling (LS) algorithm, which is the optimal algorithms for two and three machines \cite{FKT89}.
For four or more machines, however, the optimal competitive ratio
remains open, despite a long line of work providing progressively improved bounds
(e.g.,~\cite{CVW94,FW00,GRT00,RC03}).
The latter objective, maximizing the minimum load, was first studied by
\cite{DFL82} and was later shown by \cite{Woeginger97}
to admit an optimal competitive ratio for an arbitrary number of machines.

Beyond the fully online model, a rich line of work studies
\emph{semi-online} scheduling,
where limited global information about the instance is revealed in advance.
Such models interpolate between offline and online scheduling
and help quantify the value of partial information;
see the surveys~\cite{TZ13,Epstein18}.
The online fair allocation problem in the normalized setting studied in this paper is closely related to semi-online scheduling with total processing time is known in advance \cite{KKST97, AH12, KKG15, TW07}. The assumption in \cite{CFKNPT26} that each agent's maximum item value is known or predictable extends the semi-online scheduling framework where the maximum job processing time is either fully known or bounded within an interval \cite{TW07, GLT25, TH07}.

\section{Preliminaries}
\label{sec:prelim}

\subsection{Agents and Items}

We investigate a fair division problem with a set \( M = \{e_1,e_2,\dots,e_m\} \) of \( m \) indivisible items and a set \( N \) of \( n \) agents.
Each agent \( i \in N \) has a non-negative \emph{additive utility function} $u_i:2^{M}\rightarrow \mathbb{R}_{\geq 0}$, defined by
\(
u_i(S) = \sum_{e \in S} u_i(\{e\})
\)
for any subset \( S \subseteq M \).
For convenience, we abuse notation and write $e$ for the singleton set $\{e\}$.
We consider two cases based on whether the utility functions of different agents are identical. In the \emph{identical} case, all agents share the same utility function,
i.e., \( u_i = u_j \) for all \( i,j \in N \).
In the \emph{general} case, agents may have different utility functions.
Alternatively, we classify the problem based on whether the total utility for each agent is the same. In the \emph{normalized} case, \( u_i(M)\) is constant for all \( i \in N \). We assume w.l.o.g. that \( u_i(M) = n \) for all \( i \in N \). In the \emph{non-normalized} case,  the values of \( u_i(M)\) may differ across agents.

We consider two types of items.
For \emph{goods}, $u_i(e)$ denotes the \emph{value} of item $e$ to agent $i$, and we write $v_i$ for the \emph{valuation function}.
For \emph{chores}, $u_i(e)$ denotes the \emph{cost} incurred by agent $i$ for item $e$, and we write $c_i$ for the \emph{cost function}.
Under identical utilities, we omit the subscript $i$.

In the online allocation model, items arrive sequentially. Upon the arrival of each item, it must be irrevocably assigned to one of the agents based only on the information revealed so far. Reallocation of previously assigned items is not allowed. However, for the normalized case, the total utility $n$ is known in advance and can be regarded as additional information.

An $n$-partition of \( M \) is a set of $n$ bundles \( {\cal A} = (A_1, \dots, A_n) \) such that $\cup_{i=1}^n A_i=M$ and $A_i\cap A_j =\emptyset $ for $i\neq j$. The set of all $n$-partitions of $M$ is denoted $\Pi_n(M)$. An allocation corresponds to a $n$-partition \( {\cal A} \in \Pi_n(M) \) where agent \( i \) receives bundle \( A_i \).
Let \( A_i^j \) denote the bundle held by agent \( i \) after the first \( j \) items have been allocated.
In particular, \( A_i^0 = \emptyset \) and \( A_i^m = A_i \).

\subsection{Fairness Notions}

We now formally define the fairness notions considered in this paper,
namely EF, EFX, EF1, PROP, PROPX, PROP1, and MMS,
for both goods and chores.

\begin{definition}
	[Envy-freeness]
	An allocation ${\cal A}$ is called \emph{$\alpha$-approximate envy-free ($\alpha$-EF)}
	if the following holds.
	For \emph{goods}, $0\le\alpha\leq 1$ and for all $i,j\in N$,
	\[
	v_i(A_i)\ge \alpha\, v_i(A_j),
	\]
	while for \emph{chores}, $\alpha\ge 1$ and for all $i,j\in N$,
	\[
	c_i(A_i)\le \alpha\, c_i(A_j).
	\]
	When $\alpha=1$, the allocation is called EF.
\end{definition}

\begin{definition}
	[Envy-freeness up to Any Item]
	An allocation ${\cal A}$ is called \emph{$\alpha$-approximate envy-free up to any item ($\alpha$-EFX)}
	if the following holds.
	For \emph{goods}, $0\le\alpha\leq 1$ and for all $i,j\in N$ and any $e\in A_j$,
	\[
	v_i(A_i)\ge \alpha\, v_i(A_j\setminus e),
	\]
	while for \emph{chores}, $\alpha\ge 1$ and for all $i,j\in N$ and any $e\in A_i$,
	\[
	c_i(A_i\setminus  e)\le \alpha\, c_i(A_j).
	\]
	When $\alpha=1$, the allocation is called \emph{EFX}.
\end{definition}

\begin{definition}
	[Envy-freeness up to One Item]
	An allocation ${\cal A}$ is called \emph{$\alpha$-approximate envy-free up to one item ($\alpha$-EF1)}
	if the following holds.
	For \emph{goods}, $0\le\alpha\leq 1$ and for all $i,j\in N$,
	there exists $e\in A_j$ such that
	\[
	v_i(A_i)\ge \alpha\, v_i(A_j\setminus e),
	\]
	while for \emph{chores}, $\alpha\ge 1$ and for all $i,j\in N$,
	there exists $e\in A_i$ such that
	\[
	c_i(A_i\setminus  e)\le \alpha\, c_i(A_j).
	\]
	When $\alpha=1$, the allocation is called \emph{EF1}.
\end{definition}

\begin{definition}
	[Proportionality]
	An allocation ${\cal A}$ is called \emph{$\alpha$-approximate proportional ($\alpha$-PROP)}
	if the following holds.
	For \emph{goods}, $0\le\alpha\leq 1$ and for all $i\in N$,
	\[
	v_i(A_i)\ge \frac{\alpha}{n}\, v_i(M),
	\]
	while for \emph{chores}, $\alpha\ge 1$ and for all $i\in N$,
	\[
	c_i(A_i)\le \frac{\alpha}{n}\, c_i(M).
	\]
	When $\alpha=1$, the allocation is called \emph{PROP}.
\end{definition}

\begin{definition}
	[Proportionality up to Any Item]
	An allocation ${\cal A}$ is called \emph{$\alpha$-approximate proportional up to any item ($\alpha$-PROPX)}
	if the following holds.
	For \emph{goods}, $0\le\alpha\leq 1$ and for all $i\in N$ and any $e\in M\setminus A_i$,
	\[
	v_i(A_i\cup e)\ge \frac{\alpha}{n}\, v_i(M),
	\]
	while for \emph{chores}, $\alpha\ge 1$ and for all $i\in N$ and any $e\in A_i$,
	\[
	c_i(A_i\setminus e)\le \frac{\alpha}{n}\, c_i(M).
	\]
	When $\alpha=1$, the allocation is called \emph{PROPX}.
\end{definition}

\begin{definition}
	[Proportionality up to One Item]
	An allocation ${\cal A}$ is called \emph{$\alpha$-approximate proportional up to one item ($\alpha$-PROP1)}
	if the following holds.
	For \emph{goods}, $0\le\alpha\leq 1$ and for all $i\in N$,
	there exists $e\in M\setminus A_i$ such that
	\[
	v_i(A_i\cup e)\ge \frac{\alpha}{n}\, v_i(M),
	\]
	while for \emph{chores}, $\alpha\ge 1$ and for all $i\in N$,
	there exists $e\in A_i$ such that
	\[
	c_i(A_i\setminus e)\le \frac{\alpha}{n}\, c_i(M).
	\]
	When $\alpha=1$, the allocation is called \emph{PROP1}.
\end{definition}

\begin{definition}
	[Maximin Share Fairness]
	An allocation ${\cal A}$ is called \emph{$\alpha$-approximate maximin share fair ($\alpha$-MMS)} if the following holds.
	For \emph{goods}, $0\le\alpha\leq 1$ and for all $i\in N$,
	\[ v_i(A_i)\ge \alpha\, \mathrm{MMS}_i,\]
    where 
    \[
    \mathrm{MMS}_i
    =
    \max_{{\cal X}\in\Pi_n(M)} \min_{j\in N} v_i(X_j)
    \]
    is the \emph{maximin share} of agent $i$.
    For \emph{chores}, $\alpha\ge 1$ and for all $i\in N$,
    \[
	c_i(A_i)\le \alpha\, \mathrm{MMS}_i,
	\]
	where
	\[
    \mathrm{MMS}_i=\min_{{\cal X}\in\Pi_n(M)} \max_{j\in N} c_i(X_j)
    \]
    is the \emph{maximin share} of agent $i$.
	When $\alpha=1$, the allocation is called \emph{MMS}.
\end{definition}

From the above definitions, it can be seen that if an allocation is 
$\alpha$-EF (resp. $\alpha$-PROP), then it must also be $\alpha$-EFX (resp. $\alpha$-PROPX), but the converse does not hold. Similarly, if an allocation is 
$\alpha$-EFX (resp. $\alpha$-PROPX), then it must also be $\alpha$-EF1 (resp. $\alpha$-PROP1), but the converse does not hold. Moreover, if an allocation is 
$\alpha$-EF (resp. $\alpha$-EF1), then it must also be $\alpha$-PROP (resp. $\alpha$-PROP1), but the converse does not hold. However, the implication that an 
$\alpha$-EFX allocation must also be $\alpha$-PROPX holds only for chores allocation.
More intricate relationships among different notions of approximately fair allocations can be found in \cite{ABM18}.

\subsection{Fairness Approximation Guarantee and Competitive Ratio}
\label{sec:competitive_ratio}

We now formalize the notions of approximation guarantees with respect to fairness and the competitive ratio used in this paper.
Our definition refines the standard competitive analysis framework to accommodate approximate fairness objectives, making precise the comparison between online allocations and optimal offline benchmarks.

Given an instance \( I \) of goods allocation, let 
\( \mathrm{F}_{\mathrm{g}}^{\cal A}(I) = \alpha \), where 
\( \alpha \) is the maximum value such that allocation 
${\cal A}$ is \( \alpha \)-approximate with respect to fairness notion F. Similarly, for an instance $I$ of chores allocation, we define 
\( \mathrm{F}_{\mathrm{c}}^{\cal A}(I) = \alpha \), where 
\( \alpha \) is the minimum value such that ${\cal A}$ is \( \alpha \)-approximate with respect to F.
We refer to \( \alpha \) as the \emph{approximation guarantee} of allocation \( {\cal A}  \)
with respect to \( \mathrm{F} \). When the context is clear, it is abbreviated as \( \mathrm{F}^{\cal A}(I) = \alpha \). 
The expressions of approximation guarantees under different fairness notions are summarized in Table \ref{table4}. 

\begin{table}[htbp]
	\centering
	\begin{tabular}{|c|c|c|}
\hline
		Fairness & Goods & Chores \\
\hline
		$\mathrm{EF}^{\cal A}(I)$ & $\min\left\{1, \min\limits_{i,j\in N} \frac{v_i(A_i)}{ v_i(A_j)}\right\}$ & $\max\left\{1, \max\limits_{i,j\in N} \frac{c_i(A_i)}{c_i(A_j)} \right\}$ \\
\hline
$\mathrm{EFX}^{\cal A}(I)$ & $\min\left\{1,  \min\limits_{i,j\in N}\frac{ v_i(A_i)}{v_i(A_j)-\min\limits_{e\in A_j} v_i(e)}\right\}$ & $\max\left\{1, \max\limits_{i,j\in N}\frac{c_i(A_j)-\min\limits_{e\in A_j} c_i(e)}{ c_i(A_i)} \right\}$		 \\
\hline
$\mathrm{EF1}^{\cal A}(I)$ & $\min\left\{1,  \min\limits_{i,j\in N}\frac{ v_i(A_i)}{v_i(A_j)-\max\limits_{e\in A_j} v_i(e)}\right\}$ & $\max\left\{1, \max\limits_{i,j\in N}\frac{ c_i(A_j)-\max\limits_{e\in A_j} c_i(e)}{ c_i(A_i)} \right\}$		 \\
\hline		
$\mathrm{PROP}^{\cal A}(I)$ & $\min\left\{1,\min\limits_{i\in N} \frac{nv_i(A_i)}{v_i(M)}\right\}$ & $\max\left\{1,\max\limits_{i\in N} \frac{nc_i(A_i)}{c_i(M)} \right\}$ \\
\hline
$\mathrm{PROPX}^{\cal A}(I)$ & $\min\left\{1,\min\limits_{i\in N} \frac{n\left( v_i(A_i)+\min\limits_{e\in M\setminus A_i}v_i(e)\right) }{v_i(M)}\right\}$ & $\max\left\{1,\max\limits_{i\in N} \frac{n\left( c_i(A_i)-\min\limits_{e\in A_i}c_i(e)\right) }{c_i(M)} \right\}$ \\
\hline
$\mathrm{PROP1}^{\cal A}(I)$ & $\min\left\{1,\min\limits_{i\in N} \frac{n\left( v_i(A_i)+\max\limits_{e\in M\setminus A_i}v_i(e)\right) }{v_i(M)}\right\}$ & $\max\left\{1,\max\limits_{i\in N} \frac{n\left( c_i(A_i)-\max\limits_{e\in A_i}c_i(e)\right) }{c_i(M)} \right\}$ \\
\hline
$\mathrm{MMS}^{\cal A}(I)$ & $\min\limits_{i\in N}\frac{v_i(A_i)}{\max\limits_{{\cal X}\in\Pi_n(M)} \min\limits_{j\in N} v_i(X_j)}$ & $\max\limits_{i\in N}\frac{c_i(A_i)}{\min\limits_{{\cal X}\in\Pi_n(M)} \max\limits_{j\in N} c_i(X_j)}$ \\
\hline
	\end{tabular}
	\caption{The expressions of approximation guarantee under different fairness notions}	\label{table4}
\end{table}

For each instance \( I \), let $\mathrm{F}^*(I)$ be the optimal approximation guarantee among all allocations with respect to \( \mathrm{F} \). That is, 
$$
\mathrm{F}_{\mathrm{g}}^*(I) = \sup_{ {\cal A} \in\Pi_n(M)} \mathrm{F}_{\mathrm{g}}^{\cal A}(I)
, \quad 
\mathrm{F}_{\mathrm{c}}^*(I) = \inf_{ {\cal A} \in\Pi_n(M)} \mathrm{F}_{\mathrm{c}}^{\cal A}(I).
$$
Clearly, $\mathrm{F}^*(I)=1$ implies that there exists an allocation for instance $I$
that exactly satisfies the fairness notion \( \mathrm{F} \).
For goods allocation, $\mathrm{F}_{\mathrm{g}}^*(I) < 1$ means that exact fairness is unattainable and only approximate fairness can be achieved, whereas for chores allocation, this corresponds to $\mathrm{F}_{\mathrm{c}}^*(I) > 1$.
Allocations whose approximation guarantee are exactly  $\mathrm{F}^*(I)$ are called optimal allocations with respect to F. For the same instance, optimal allocations under different fairness notions are not necessarily the same. For a given instance $I$ of goods allocation, $\mathrm{F}^*(I)=0$ indicates that there exists no non-trivial allocation under the fairness notion $\mathrm{F}$. For chore allocation, such instances are characterized by $\mathrm{F}^*(I)=\infty$ when $\mathrm{F}\in\{\mathrm{EF},\mathrm{EFX},\mathrm{EF1}\}$ and $\mathrm{F}^*(I)=n$ when $\mathrm{F}\in\{\mathrm{PROP},\mathrm{PROPX},\mathrm{PROP1}, \mathrm{MMS}\}$.

The optimal fairness approximation guarantee varies across instances. For a fixed fairness notion, finding the worst optimal approximation guarantee among all instances is a hot research topic in the field of fair division. Table \ref{tab:fairness_summary} summarizes the the best known results so far.

\begin{table}[h!]
\centering
\begin{tabular}{|c|c|c|c|c|}
\hline
\multirow{2}{*}{Fairness} & \multicolumn{2}{c|}{Goods} & \multicolumn{2}{c|}{Chores} \\
\cline{2-5}
 & Identical & General & Identical & General \\
\hline
EF    & 0 & 0 & 0 & 0 \\
\hline
EFX   & 1 & \makecell{$(n\le 3)$ 1 \cite{CGM24} \\ $(4\le n \le 7)$ $\frac23$ \cite{AFS24} \\  $(n\ge 8)$ $\frac{\sqrt{5}-1}{2}$ \cite{AMN20}} & 1 & \makecell{$(n = 2)$ 1\\ $(n\ge 3)$ $4$ \cite{GMQ25}} \\
\hline
EF1   & 1 & 1 \cite{LMM04} & 1 & 1 \cite{LMM04} \\
\hline
PROP  & 0 & 0 & 0 & 0 \\
\hline
PROPX & \makecell{$(n = 2)$ 1 \\ $(n\ge 3)$ 0} & \makecell{$(n = 2)$ 1\\ $(n\ge 3)$ 0} & 1 & 1 \cite{ALM24} \\
\hline
PROP1 & 1 & 1 \cite{LMM04} & 1 & 1 \cite{LMM04} \\
\hline
MMS   & 1 & \makecell{$(n = 2)$ 1 \cite{PW14}\\ $(n = 3)$ $\frac{11}{12}$ \cite{FN22} \\ $(n = 4)$ $\frac{4}{5}$ \cite{GHSSY18}\\ $(n\ge 5)$ $\frac{3}{4}+\frac{3}{3836}$ \cite{AG24}}  & 1 & \makecell{$(n= 2)$ 1 \cite{PW14} \\ $(n = 3)$ $\frac{15}{13} $ \cite{HS23}\\ $(4\le n \le 7)$ $\frac{20}{17}$ \cite{HS23} \\ $(n\ge 8)$ $\frac{13}{11}$ \cite{HS23}} \\
\hline
\end{tabular}
\caption{Best known approximation guarantees for different fairness notions. Some folklore results are presented without citation. }
\label{tab:fairness_summary}
\end{table}

In this paper, we focus on designing online algorithms to obtain allocations for any instance with the goal of optimizing the fairness approximation guarantee. The performance of an online algorithm is measured by its competitive ratio. Let \( H \) be an online allocation algorithm, and the allocation generated by \( H \)  is denoted as ${\cal H}$. We now give the formal definition of the competitive ratio. For instances with \( \mathrm{F}_{\mathrm{g}}^*(I) = 0 \) or \( \mathrm{F}_{\mathrm{c}}^*(I) = \infty\), it must be that \( \mathrm{F}_{\mathrm{g}}^{\cal H}(I) = 0 \) or  \( \mathrm{F}_{\mathrm{c}}^{\cal H}(I) = \infty\), and we conventionally set $\frac{\mathrm{F}^{\cal H}(I)}{\mathrm{F}^*(I)}=1$. 

\begin{definition}[Competitive Ratio]
The \emph{competitive ratio} of algorithm \( H \) with respect to a fairness notion \( \mathrm{F} \) is defined as
\[
\mathrm{CR}_{\mathrm{g}}^{\mathrm{F}}(H) = \inf_I \frac{\mathrm{F}_{\mathrm{g}}^{\cal H}(I)}{\mathrm{F}_{\mathrm{g}}^*(I)}
\]
for goods allocation, and 
\[
\mathrm{CR}_{\mathrm{c}}^{\mathrm{F}}(H) = \sup_I \frac{\mathrm{F}_{\mathrm{c}}^{\cal H}(I)}{\mathrm{F}_{\mathrm{c}}^*(I)}
\]
for chores allocation. 	
\end{definition}

When the context is clear, we use $\text{CR}^{\mathrm{F}}(H)$  
to denote the competitive ratio of algorithm $H$.
The closer the competitive ratio of an algorithm is to 1, the better its performance. If no online algorithm exists with a competitive ratio better than $r$, and the competitive ratio of \( H \) is exactly $r$, then \( H \) is referred to as \emph{optimal}. Obviously, if under a certain fairness notion, no algorithm achieves a competitive ratio better than $r$ with identical utility functions, then no such algorithm exists with general utility functions either. Similarly, if no algorithm achieves a competitive ratio better than 
$r$ in the normalized setting, then none exists in the non-normalized setting either.

Zhou et al. \cite{ZBW23} studied online allocation under the fair notion of MMS, and more fair notions such as EF1, EFX, and PROP1 were investigated  in \cite{NPT25}. They also use competitive ratio to measure performance of online algorithms, with a definition that  differs slightly from ours. They directly take the worst of the approximation guarantee achievable by the algorithm over all instances, i.e., $\inf\limits_I \mathrm{F}_{\mathrm{g}}^{\cal H}(I)$ or $\sup\limits_I \mathrm{F}_{\mathrm{c}}^{\cal H}(I)$, 
as its competitive ratio. In fact, if under a specific fairness notion F, $\mathrm{F}^*(I)=1$ for any instance $I$, the two definitions are equivalent. 
We have listed corresponding results of \cite{ZBW23} and \cite{NPT25} in Tables \ref{tab:goods_results} and \ref{tab:chores_results}.  

However, there are indeed other cases where $\mathrm{F}^*(I)\neq 1$ for some instance $I$ (Ref. Table \ref{tab:fairness_summary}). 
For these cases, the optimal approximation guarantee may vary across different instances under a specific fairness notion, let alone across different notions.
In a more extreme case, if there exists an instance $I$ of goods allocation such that under a specific fairness notion F, $\mathrm{F}^*(I)=0$, then the worst-case approximation guarantee of any algorithm can only be $0$. Since this instance admits no non-trivial allocation, one cannot expect any algorithm to achieve this.
Hence, measuring and comparing algorithm performance based on the approximation guarantee seems somewhat unreasonable. For this reason, in this paper, we adopt the standard approach commonly used in competitive analysis to define the competitive ratio of an online algorithm.

\section{Special Properties with Identical Utility Functions or Two Agents}
\label{sec:structural}

\subsection{Connection Between Fair Division and Parallel Machine Scheduling}

Parallel machine scheduling is also an optimization problem related to resource allocation, and thus it is closely connected to fair division, particularly in the case of identical utilities. 
By regarding agents as machines and items as jobs, the utility of an item to an agent corresponds to the processing time of the job on that machine. Each allocation corresponds to a feasible schedule. The total utility of the items obtained by an agent corresponds to the completion time of the machine. Therefore, the maximum and minimum utilities among all agents correspond to the makespan and the minimum completion time of all machines, respectively.

Specifically, in the case of identical utilities, the utility of any set of items is the same for all agents. 
A similar connection between MMS and parallel machine scheduling in this setting was previously observed by Zhou et al.~\cite{ZBW23}. 
For completeness, we provide a formal treatment, where a similar connection also extends to the refined definition of the competitive ratio under PROP. 
Let $u_{\max}^{\cal A}=\max_{i\in N}u(A_i)$ and $u_{\min}^{\cal A}=\min_{i\in N}u(A_i)$. Then 
$$\mathrm{EF}_{\mathrm{g}}^{\cal A}(I)= \frac{v_{\min}^{\cal A}}{v_{\max}^{\cal A}}, \, \mathrm{EF}_{\mathrm{c}}^{\cal A}(I)= \frac{c_{\max}^{\cal A}}{c_{\min}^{\cal A}}$$ 
and 
$$\mathrm{PROP}_{\mathrm{g}}^{\cal A}(I)=\frac{nv_{\min}^{\cal A}}{v(M)} , \, \mathrm{PROP}_{\mathrm{c}}^{\cal A}(I)=\frac{nc_{\max}^{\cal A}}{c(M)}.$$ 
Furthermore, for goods allocation and chores allocation, the maximin share of each agent equals to the optimal minimum machine completion time and optimal makespan, respectively.  Therefore, for any allocation ${\cal A}$, 
\[
\frac{\mathrm{PROP}_{\mathrm{g}}^{\cal A}(I)}{\mathrm{PROP}_{\mathrm{g}}^*(I)} = \frac{v_{\min}^{\cal A}}{\max_{{\cal X}\in \Pi_n(M)}v_{\min}^{\cal X}} = \mathrm{MMS}_{\mathrm{g}}^{\cal A}(I),
\]
and 
\[
\frac{\mathrm{PROP}_{\mathrm{c}}^{\cal A}(I)}{\mathrm{PROP}_{\mathrm{c}}^*(I)} = \frac{c_{\max}^{\cal A}}{\min_{{\cal X}\in \Pi_n(M)}c_{\max}^{\cal X}} = \mathrm{MMS}_{\mathrm{c}}^{\cal A}(I).
\]
Recall that $\mathrm{MMS}^*(I)=1$ for any instance $I$ in the identical case, and thus 
\begin{equation}\label{prop=mms}
\frac{\mathrm{PROP}^{\cal A}(I)}{\mathrm{PROP}^*(I)} =\frac{\mathrm{MMS}^{\cal A}(I)}{\mathrm{MMS}^*(I)}.
\end{equation}
It follows that the competitive ratio of any online algorithm $H$ with respect to MMS equals that with respect to PROP.

Therefore, finding the optimal approximation guarantee under PROP for goods and chores allocation are equivalent to parallel machine scheduling problems $Pn||C_{\min}$ and $Pn||C_{\max}$, respectively, which have been extensively studied in scheduling literature \cite{DFL82, Graham66}. For online allocation with identical utilities under PROP, results from the online scheduling problems $Pn||C_{\min}$ \cite{Woeginger97} and $Pn||C_{\max}$ \cite{Graham66, FW00, GRT00, CVW94, RC03} can be directly applied. In particular, the normalized case corresponds to the semi-online variant where the total processing time of all jobs is known as a priori. Consequently, we can leverage existing results on corresponding semi-online scheduling problem with known total processing time \cite{TW07, KKST97, AH12, KKG15}. 

Although $\mathrm{EF}^{\cal A}(I)$ can also be interpreted as the ratio between the makespan and the minimum machine load \cite{Wu05}, similar objectives have been less studied in online scheduling. The other fairness notions, such as EFX, EF1, PROPX, PROP1, etc., do not have obvious significance in the context of scheduling and, consequently, have not been investigated. 
Nevertheless, we can still utilize some results obtained from scheduling. For example, it is well known that for any allocation ${\cal A}$,  
\begin{equation}\label{eq:cmax}
	u_{\max}^{\cal A}\geq \max\left\{\frac{1}{n}\sum_{j=1}^mu(e_j),\ \max_{1\leq j \leq m}u(e_j)\right\},	
\end{equation}
and for each $0 \le s \le n-1$, 
\begin{equation}\label{eq:cmin}
	u_{\min}^{\cal A} \le \frac{u(M) - \sum_{k=1}^{s} u(e_{j_k})}{n-s},
\end{equation}
where $e_{j_1}, e_{j_2}, \dots, e_{j_{n-1}}$ are arbitrary $n-1$ items in $M$ \cite{McNaughton59, HT03}.

It is worth noting that in scheduling, the problem where jobs require different processing times on different machines is referred to as unrelated machine scheduling \cite{LST90,BS06}. However, this does not directly correspond to fair division in the general case. This is because, in most cases, determining whether a fairness requirement is satisfied depends on comparing the utilities of different bundles within each agent, rather than comparing the utilities of bundles obtained among different agents.

Moreover, for the general utilities case, \eqref{prop=mms} does not necessarily hold, even if $n=2$ and $\mathrm{MMS}^*=1$. For example, consider an instance of goods allocation with $3$ items. The values of items are  
$$v_1(e_1)=v_2(e_1)=0.5,\ v_1(e_2)=1.4,\ v_2(e_2)=1.3,\ v_1(e_3)=0.1,\ v_2(e_3)=0.2. $$
We can verify that $\text{MMS}_1=0.6$ from allocation ${\cal A}'$ with $A'_1=\{e_1,e_3\}$ and $A'_2=\{e_2\}$, and $\text{MMS}_2=0.7$ from allocation ${\cal A}''$ with $A''_1=\{e_2\}$ and $A''_2=\{e_1,e_3\}$. Both allocations are optimal MMS allocations, and $\text{MMS}^*=1$. ${\cal A}''$ is also an optimal PROP allocation, and $\text{PROP}^*=0.7$. Let ${\cal H}$ with $H_1=\{e_3\}$ and $H_2=\{e_1, e_2\}$ be the allocation produced by some algorithm~$H$. We have $\text{MMS}^{\cal H}=\min\{\frac{v_1(H_1)}{\text{MMS}_1},\frac{v_2(H_2)}{\text{MMS}_2}\}=\frac{0.1}{0.6}=\frac{1}{6}$ and $\text{PROP}^{\cal H}=\min\{\frac{2v_1(H_1)}{v_1(M)},\frac{2v_2(H_2)}{v_2(M)}\}=0.1$. Hence, $\frac{\text{MMS}^{\cal H}}{\text{MMS}^*}=\text{MMS}^{\cal H}\neq \frac{\text{PROP}^{\cal H}}{\text{PROP}^*}$.

\subsection{Relationship Between Goods Allocation and Chores Allocation}

While goods allocation and chore allocation typically exhibit divergent characteristics,  we observe strong connections between the algorithms and their competitive ratios under the EF, EFX, and EF1 notions in the identical case. 

It can be observed from Table \ref{table4} that for any fairness notion $\mathrm{F}\in\{\mathrm{EF},\mathrm{EFX},\mathrm{EF1}\}$, if $v(e)=c(e)$ for any $e\in M$, then
\begin{equation}\label{reciprocal}
	\mathrm{F}^{\cal A}_\mathrm{g}(I)=\frac{1}{\mathrm{F}^{\cal A}_\mathrm{c}(I)}	
\end{equation}
for any ${\cal A}\in \Pi_n(M)$.

\begin{theorem}
	\label{thm:goods-chores-reciprocal}
	For the online fair allocation of goods and chores
	with identical utility functions,
	under any fairness notion $\mathrm{F}\in\{\mathrm{EF},\mathrm{EFX},\mathrm{EF1}\}$,
	the competitive ratio of any algorithm for chores is the reciprocal of that for goods, that is,
	\[
	\mathrm{CR}_{\mathrm{c}}^{\mathrm{F}}(H)
	=
	\frac{1}{\mathrm{CR}_{\mathrm{g}}^{\mathrm{F}}(H)}.
	\]
\end{theorem}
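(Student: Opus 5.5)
The plan is to set up a bijection between goods instances and chores instances with identical utilities, under which any online algorithm $H$ behaves identically. Then (\ref{reciprocal}) is transferred instance by instance, first to the optimal benchmark and then to the ratio.

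Given a goods instance $I_\mathrm{g}$ with valuation $v$ and arrival order $e_1,\dots,e_m$, let $I_\mathrm{c}$ be the chores instance with the same items, the same order, and $c(e)=v(e)$ for every $e$; conversely, every chores instance arises this way. An online algorithm $H$ sees only the revealed utilities (and, in the normalized case, the known total $n$, which is the same in both instances). Hence $H$ can be regarded as acting on the underlying number sequence. Since $H$ is one fixed rule applied to goods and to chores, its allocation ${\cal H}$ is the same partition on $I_\mathrm{g}$ and on $I_\mathrm{c}$. I would state this identification explicitly as the interpretation of ``any algorithm'' in the theorem.

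\textbf{Step 1: the benchmark.} Since (\ref{reciprocal}) holds for every ${\cal A}\in\Pi_n(M)$ and $x\mapsto 1/x$ is decreasing on $(0,\infty]$ (with $1/0=\infty$), we get
\[
\mathrm{F}^*_\mathrm{c}(I_\mathrm{c})=\inf_{{\cal A}}\mathrm{F}^{\cal A}_\mathrm{c}(I_\mathrm{c})=\inf_{{\cal A}}\frac{1}{\mathrm{F}^{\cal A}_\mathrm{g}(I_\mathrm{g})}=\frac{1}{\sup_{{\cal A}}\mathrm{F}^{\cal A}_\mathrm{g}(I_\mathrm{g})}=\frac{1}{\mathrm{F}^*_\mathrm{g}(I_\mathrm{g})}.
\]
Before this, I would check that (\ref{reciprocal}) holds entrywise from Table~\ref{table4} in the identical case, including the $\min\{1,\cdot\}$ and $\max\{1,\cdot\}$ truncations, which are reciprocal to each other. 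I would also check the degenerate cases with zero denominators. A zero goods ratio corresponds to an infinite chores ratio, and bundles with $u(A_j)-\max_{e \in A_j} u(e)=0$ give the value $1$ on both sides.

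\textbf{Step 2: the ratio.} For each pair of instances,
\[
\frac{\mathrm{F}^{\cal H}_\mathrm{c}(I_\mathrm{c})}{\mathrm{F}^*_\mathrm{c}(I_\mathrm{c})}=\frac{1/\mathrm{F}^{\cal H}_\mathrm{g}(I_\mathrm{g})}{1/\mathrm{F}^*_\mathrm{g}(I_\mathrm{g})}=\left(\frac{\mathrm{F}^{\cal H}_\mathrm{g}(I_\mathrm{g})}{\mathrm{F}^*_\mathrm{g}(I_\mathrm{g})}\right)^{-1}.
\]
The conventional cases also match: $\mathrm{F}^*_\mathrm{g}=0$ holds exactly when $\mathrm{F}^*_\mathrm{c}=\infty$, and both ratios are then set to $1$. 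If $\mathrm{F}^*_\mathrm{g}>0$ but $\mathrm{F}^{\cal H}_\mathrm{g}=0$, the goods ratio is $0$ and the chores ratio is $\infty$, which is consistent. Taking the supremum over $I_\mathrm{c}$ is the same as taking it over $I_\mathrm{g}$ by the bijection. Since the supremum of reciprocals equals the reciprocal of the infimum, $\mathrm{CR}^\mathrm{F}_\mathrm{c}(H)=1/\mathrm{CR}^\mathrm{F}_\mathrm{g}(H)$, with $1/0=\infty$.

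\textbf{Main obstacle.} The delicate point is not the algebra but the careful verification of (\ref{reciprocal}) for EFX and EF1, together with the edge conventions. In EFX, the goods expression removes the item from the envied bundle $A_j$, while the chores expression removes it from the envier's bundle. With identical utilities, reindexing the pair $(i,j)\leftrightarrow(j,i)$ makes the two maxima and minima correspond. I would do this reindexing explicitly and treat zero-utility items and empty bundles as separate boundary cases.
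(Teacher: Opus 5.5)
Your proposal is correct and follows essentially the same route as the paper. Like the paper, you pair each chores instance with the goods instance having $v=c$, apply \eqref{reciprocal} both to the algorithm's allocation and (using that $x\mapsto 1/x$ is decreasing) to the optimal benchmark, and then turn the supremum of reciprocals into the reciprocal of the infimum. The differences are cosmetic: you use the bijection between instances to get equality in one step where the paper proves the two inequalities separately, and you spell out the identification of $H$ across the two settings, the reindexing behind \eqref{reciprocal} for EFX and EF1, and the boundary conventions, all of which the paper leaves implicit.
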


\begin{proof}
	For any instance $I$ of chores allocations, we construct a corresponding instance of goods allocations such that $c(e)=v(e)$ for any $e\in M$. For any fairness notion $\mathrm{F}\in\{\mathrm{EF},\mathrm{EFX},\mathrm{EF1}\}$ and for any algorithm~$H$, 
    $\mathrm{F}^{\cal H}_\mathrm{g}(I)=\frac{1}{\mathrm{F}^{\cal H}_\mathrm{c}(I)}$ by (\ref{reciprocal}). Again by (\ref{reciprocal}), $\mathrm{F}^*_\mathrm{g}(I)=\frac{1}{\mathrm{F}^*_\mathrm{c}(I)}$.
Thus
$$
	\mathrm{CR}_{\mathrm{c}}^{\mathrm{F}}(H)=\sup_{I}\frac{\mathrm{F}^{\cal H}_{\mathrm{c}}(I)}{\mathrm{F}^*_{\text{c}}(I)}=\left( \inf_{I}\frac{\mathrm{F}^{\cal H}_{\mathrm{g}}(I)}{\mathrm{F}^*_{\text{g}}(I)} \right)^{-1}
	\leq \frac{1}{\mathrm{CR}_{\mathrm{g}}^{\mathrm{F}}(H)}.
$$
Similarly, we can also prove that $\mathrm{CR}_{g}^{\mathrm{F}}(H)\leq \frac{1}{\mathrm{CR}_{c}^{\mathrm{F}}(H)}$. Hence, $\mathrm{CR}_{c}^{\mathrm{F}}(H)= \frac{1}{\mathrm{CR}_{g}^{\mathrm{F}}(H)}$.
\end{proof}

As a result, for the case of identical utilities under the EF, EFX, and EF1 notion, we restrict our discussion to goods allocation.

\subsection{Relationship Between Fair Allocation under EF and PROP}

In this subsection, we present the relationship between the competitive ratios of online algorithms for fair allocation with two agents under the EF and PROP notions.

\begin{theorem}\label{thm:cr_prop_ef}
(i) For goods allocation in the normalized setting with two agents, an online algorithm that achieves a competitive ratio of $r_1$ under \emph{PROP} guarantees a competitive ratio of at least $\frac{r_1}{2 - r_1}$ under \emph{EF}.

(ii) For chores allocation in the normalized setting with two agents, an online algorithm that achieves a competitive ratio of $r_2$ under \emph{EF} guarantees a competitive ratio of at most $\frac{2 r_2}{1 + r_2}$ under \emph{PROP}.
\end{theorem}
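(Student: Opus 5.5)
The key observation is that with two agents in the normalized setting, $\mathrm{EF}^{\cal A}(I)$ and $\mathrm{PROP}^{\cal A}(I)$ are linked by a fixed monotone bijection. The competitive-ratio bounds should then follow from a one-variable monotonicity argument, applied instance by instance.

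\emph{Part (i), goods.} Since $v_i(M)=2$, we have $v_i(A_{3-i})=2-v_i(A_i)$. Hence
\[
\frac{v_i(A_i)}{v_i(A_{3-i})}=f\bigl(v_i(A_i)\bigr),\qquad f(x)=\frac{x}{2-x},
\]
and $f$ is strictly increasing on $[0,2)$ with $f(1)=1$. For the same reason, $\mathrm{PROP}^{\cal A}(I)=\min\{1,\min_i v_i(A_i)\}$.

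\begin{enumerate}
\item Apply $f$ to this identity. Monotonicity of $f$ together with $f(1)=1$ gives $\mathrm{EF}^{\cal A}(I)=f(\mathrm{PROP}^{\cal A}(I))$ for every allocation. The caps at $1$ match, and the degenerate case $v_i(A_{3-i})=0$ is the case $v_i(A_i)=2$, which the cap absorbs.
\item Since $f$ is increasing, the same identity holds for the optima: $\mathrm{EF}^*(I)=f(\mathrm{PROP}^*(I))$.
\item Fix an instance and write $p^*=\mathrm{PROP}^*(I)\in(0,1]$. The case $p^*=0$ falls under the ratio-$1$ convention. The hypothesis gives $\mathrm{PROP}^{\cal H}(I)\ge r_1p^*$, so
\[
\frac{\mathrm{EF}^{\cal H}(I)}{\mathrm{EF}^*(I)}\ \ge\ \frac{f(r_1p^*)}{f(p^*)}=r_1\,\frac{2-p^*}{2-r_1p^*}.
\]
\item Differentiating in $p^*$, the numerator of the derivative is $2(r_1-1)\le 0$. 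So the right-hand side is nonincreasing and is minimized at $p^*=1$, where it equals $\frac{r_1}{2-r_1}$. Taking the infimum over $I$ proves (i).
\end{enumerate}

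\emph{Part (ii), chores.} The argument is symmetric. Since $c_i(M)=2$, we have $\mathrm{EF}^{\cal A}=\max\{1,g(\max_i c_i(A_i))\}$ with $g(y)=\frac{y}{2-y}$, and $\mathrm{PROP}^{\cal A}=\max\{1,\max_i c_i(A_i)\}$. Therefore $\mathrm{PROP}^{\cal A}=h(\mathrm{EF}^{\cal A})$, where
\[
h(e)=g^{-1}(e)=\frac{2e}{1+e}
\]
is increasing, and likewise $\mathrm{PROP}^*=h(\mathrm{EF}^*)$. Write $e^*=\mathrm{EF}^*(I)$.

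\begin{enumerate}
\item The boundary case $e^*=\infty$ means every allocation gives some agent its entire cost $2$. Then $\mathrm{PROP}^*=2=n$ and the ratio is trivially at most $1\le\frac{2r_2}{1+r_2}$.
\item Otherwise $\mathrm{EF}^{\cal H}\le r_2e^*$, and since $h$ is increasing,
\[
\frac{\mathrm{PROP}^{\cal H}}{\mathrm{PROP}^*}\le \frac{h(r_2e^*)}{h(e^*)}=r_2\,\frac{1+e^*}{1+r_2e^*}.
\]
\item This expression is nonincreasing in $e^*\ge1$ because $r_2\ge1$. Its maximum is therefore at $e^*=1$, giving $\frac{2r_2}{1+r_2}$.
\end{enumerate}

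The only delicate step is checking that the caps at $1$ and the degenerate zero-denominator cases are consistent with the identities $\mathrm{EF}=f(\mathrm{PROP})$ and $\mathrm{PROP}=h(\mathrm{EF})$. These identities are what let the offline optimum transfer between the two notions. Everything else is an elementary monotonicity computation.
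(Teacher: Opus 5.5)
Your proof is correct and takes essentially the same route as the paper. Both arguments establish the per-allocation identities $\mathrm{EF}=\frac{\mathrm{PROP}}{2-\mathrm{PROP}}$ (goods) and $\mathrm{PROP}=\frac{2\,\mathrm{EF}}{1+\mathrm{EF}}$ (chores), transfer them to the offline optimum, and then use an elementary inequality showing the ratio is worst when the optimum is exactly fair. The paper phrases that last step as $\frac{f(\alpha)}{f(\beta)}\ge f(\alpha/\beta)$ for $\alpha\le\beta\le 1$, with your $f(x)=\frac{x}{2-x}$, and then passes the infimum through $f$. You instead fix the ratio at $r_1$ and minimize over $p^*$, which is the same computation; your handling of the degenerate cases ($p^*=0$, $e^*=\infty$, zero denominators) is more explicit than the paper's.
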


\begin{proof}
(i) From Table \ref{table4}, we have
\begin{equation}\label{t63}
	\mathrm{EF}_{\mathrm{g}}^{\cal A}(I)=\min\left\{1, \frac{v_1(A_1)}{2-v_1(A_1)}, \frac{v_2(A_2)}{2-v_2(A_2)}\right\}=\frac{\min\{1, v_1(A_1), v_2(A_2)\}}{2-\min\{1, v_1(A_1), v_2(A_2)\}}=\frac{\text{PROP}_{\mathrm{g}}^{\cal A}(I)}{2-\text{PROP}_{\mathrm{g}}^{\cal A}(I)}
\end{equation}
Note that for any $\alpha \le \beta \le 1$,
\begin{equation}\label{eq:goods_cr_prop_ef_2}
	\frac{\frac{\alpha}{2-\alpha}}{\frac{\beta}{2-\beta}}-\frac{\frac{\alpha}{\beta}}{2-\frac{\alpha}{\beta}}=\frac{\alpha}{\beta}\left(\frac{2-\beta}{2-\alpha}-\frac{\beta}{2\beta-\alpha}\right)=\frac{2\alpha(\beta-\alpha)(1-\beta)}{\beta(2-\alpha)(2\beta-\alpha)} \ge 0.
\end{equation}
Suppose that $H$ is an online algorithm for goods allocation with a competitive ratio of $r_1$ under PROP. Since $\text{PROP}_{\mathrm{g}}^{\cal H}(I)\leq \text{PROP}_{\mathrm{g}}^{*}(I)\leq 1$, by \eqref{t63} and \eqref{eq:goods_cr_prop_ef_2}, 
$$
\begin{aligned}
	\text{CR}^{\text{EF}}_{\mathrm{g}}(H)
	&=\inf_I\frac{\text{EF}_{\mathrm{g}}^{\cal H}(I)}{\text{EF}_{\mathrm{g}}^*(I)}
	=\inf_I \frac{\frac{\text{PROP}_{\mathrm{g}}^{\cal H}(I)}{2-\text{PROP}_{\mathrm{g}}^{\cal H}(I)} }{\frac{\text{PROP}_{\mathrm{g}}^{*}(I)}{2-\text{PROP}_{\mathrm{g}}^{*}(I)}}\ge \inf_I \frac{\frac{\text{PROP}_{\mathrm{g}}^{\cal H}(I)}{\text{PROP}_{\mathrm{g}}^*(I)}}{2-\frac{\text{PROP}_{\mathrm{g}}^{\cal H}(I)}{\text{PROP}_{\mathrm{g}}^*(I)}} \\
	&= \frac{\inf_I \frac{\text{PROP}_{\mathrm{g}}^{\cal H}(I)}{\text{PROP}_{\mathrm{g}}^*(I)}}{2-\inf_I\frac{\text{PROP}_{\mathrm{g}}^{\cal H}(I)}{\text{PROP}_{\mathrm{g}}^*(I)}}=\frac{\text{CR}^{\text{PROP}}_{\mathrm{g}}(H)}{2-\text{CR}^{\text{PROP}}_{\mathrm{g}}(H)}=\frac{r_1}{2-r_1}.
\end{aligned}
$$
Hence, $H$ is an online algorithm with a competitive ratio at least $\frac{r_1}{2-r_1}$ under EF. 

(ii) From Table \ref{table4}, we have 
\begin{equation*}
	\mathrm{EF}_{\mathrm{c}}^{\cal A}(I)=\max\left\{1, \frac{c_1(A_1)}{2-c_1(A_1)}, \frac{c_2(A_2)}{2-c_2(A_2)}\right\}=\frac{\max\{1, c_1(A_1), c_2(A_2)\}}{2-\max\{1, c_1(A_1), c_2(A_2)\}}=\frac{\text{PROP}_{\mathrm{c}}^{\cal A}(I)}{2-\text{PROP}_{\mathrm{c}}^{\cal A}(I)}.
\end{equation*}
Thus
\begin{equation}\label{t64}
\text{PROP}_{\mathrm{c}}^{\cal A}(I)=\frac{2\mathrm{EF}_{\mathrm{c}}^{\cal A}(I)}{1+\mathrm{EF}_{\mathrm{c}}^{\cal A}(I)}.	
\end{equation}
Note that for any $\alpha \ge \beta \ge 1$,
\begin{equation}\label{eq:goods_cr_prop_ef_3}
    \frac{\frac{2\alpha}{1+\alpha}}{\frac{2\beta}{1+\beta}}-\frac{2\cdot\frac{\alpha}{\beta}}{1+\frac{\alpha}{\beta}}
    =\frac{\alpha}{\beta}\left(\frac{1+\beta}{1+\alpha}-\frac{2\beta}{\alpha+\beta}\right)
    =\frac{\alpha(\alpha-\beta)(1-\beta)}{\beta(1+\alpha)(\alpha+\beta)} \le 0.
    \end{equation}
Suppose that $H$ is an online algorithm for chores allocation with a competitive ratio of $r_2$ under EF. Since $\text{EF}_{\mathrm{c}}^{\cal H}(I)\geq \text{EF}_{\mathrm{c}}^{*}(I)\geq 1$, by \eqref{t64} and \eqref{eq:goods_cr_prop_ef_3}, 
$$
    \begin{aligned}
    \text{CR}^{\text{PROP}}_{\mathrm{c}}(H)
    &=\sup_I\frac{\text{PROP}_{\mathrm{c}}^{\cal H}(I)}{\text{PROP}_{\mathrm{c}}^*(I)}
    =\sup_I \frac{\frac{2\text{EF}_{\mathrm{c}}^{\cal H}(I)}{1+\text{EF}_{\mathrm{c}}^{\cal H}(I)} }{\frac{\text{EF}_{\mathrm{c}}^{*}(I)}{1+\text{EF}_{\mathrm{c}}^{*}(I)}}\le \sup_I \frac{2\cdot\frac{\text{EF}_{\mathrm{c}}^{\cal H}(I)}{\text{EF}_{\mathrm{c}}^*(I)}}{1+\frac{\text{EF}_{\mathrm{c}}^{\cal H}(I)}{\text{EF}_{\mathrm{c}}^*(I)}} \\
    &= \frac{2\sup_I \frac{\text{EF}_{\mathrm{c}}^{\cal H}(I)}{\text{EF}_{\mathrm{c}}^*(I)}}{1+\sup_I\frac{\text{EF}_{\mathrm{c}}^{\cal H}(I)}{\text{EF}_{\mathrm{c}}^*(I)}}=\frac{2\text{CR}^{\text{EF}}_{\mathrm{c}}(H)}{1+\text{CR}^{\text{EF}}_{\mathrm{c}}(H)}=\frac{2r_2}{1+r_2}.
    \end{aligned}
$$
Hence, $H$ is an online algorithm with a competitive ratio at most $\frac{2r_2}{1+r_2}$ under PROP. 

\end{proof}

However, the converse of Theorem \ref{thm:cr_prop_ef} may not hold, even in the identical utilities case. For example, consider an instance of goods allocation with $3$ items. The values of items are $v(e_1)=1.1$, $v(e_2)=0.5$ and $v(e_3)=0.4$. 
We can verify that $\text{EF}^*=\frac{9}{11}$ and $\text{PROP}^*=\frac{9}{10}$ from allocation ${\cal A}^*$ with $A^*_1=\{e_1\}$ and $A^*_2=\{e_2, e_3\}$. Let ${\cal H}$ with $H_1=\{e_3\}$ and $H_2=\{e_1, e_2\}$ be the allocation produced by some algorithm~$H$. We have $\text{EF}^{\cal H}=\frac{1}{4}$ and $\text{PROP}^{\cal H}=\frac{2}{5}$. Hence, $\frac{\text{EF}^{\cal H}}{\text{EF}^*}=\frac{11}{36}$ and 
$$\frac{\text{PROP}^{\cal H}}{\text{PROP}^*}=\frac{4}{9}<\frac{22}{47}=\frac{2\frac{\text{EF}^{\cal H}}{\text{EF}^*}}{1+\frac{\text{EF}^{\cal H}}{\text{EF}^*}}.$$

\section{Conclusion and Future Work}

In this paper, we revisit online fair division
from the perspective of competitive analysis with respect to approximation guarantees.
We refine the definition of competitive ratio for online fair division
to align it with the classical definition in online algorithms,
while remaining consistent with existing results in the online fair division literature.
This refined framework enables a systematic and comparable study
of approximation guarantees across different fairness notions.
To the best of our knowledge,
our results provide the first comprehensive characterization
of competitive ratios for online fair division,
covering both goods and chores within a unified framework.

Several directions remain open for future work. A natural step is to bridge the existing gaps where current upper and lower bounds do not match. It would also be worthwhile to explore alternative fairness notions, such as EFk \cite{CKMPSW19}, EFkX \cite{ARS22}, Avg-EFX \cite{BGGS21}, PROPm \cite{BGGS21}, and PROPavg \cite{KM25}. Furthermore, these problems can be extended to the setting of weighted fair division.

In this paper, we investigate the normalized setting in which the total utility of all items is known in advance. This framework can be viewed as a semi-online model, a paradigm that has been extensively studied in the scheduling literature. While various semi-online models incorporating different types of partial information have been proposed, many of these can be naturally adapted to the context of online fair division. 

Finally, online algorithms with predictions, which bridge machine learning and online problems, constitute an active area of research. In fair division with predictions, predicted values for item utilities can be obtained, and these predictions are subject to quantifiable errors. The goal is to leverage these predictions to achieve an allocation  that outperforms that obtained without any utility information.


\begin{thebibliography}{2}

\bibitem{AG24}
Akrami H, Garg J: Breaking the 3/4 barrier for approximate maximin share. Proceedings of the 35th Annual ACM-SIAM Symposium on Discrete Algorithms, 74-91, 2024.

\bibitem{ARS22}
Akrami H, Rezvan R, Seddighin M:
An EF2X allocation protocol for restricted additive valuations.
Proceedings of the 31st International Joint Conference on Artificial Intelligence, 17-23, 2022.

\bibitem{AH12}
Albers S, Hellwig M: 
Semi-online scheduling revisited. 
Theoretical Computer Science, 443, 1-9, 2012.

\bibitem{AAG15}
Aleksandrov M, Aziz H, Gaspers S, Walsh T: 
Online fair division: Analysing a food bank problem.
Proceedings of the 24th International Joint Conference on Artificial Intelligence, 2540-2546, 2015.

\bibitem{AW20}
Aleksandrov M, Walsh T: 
Online fair division: A survey.
Proceedings of the 34th AAAI Conference on Artificial Intelligence, 13557-13562, 2020.


\bibitem{ABM18}
Amanatidis G, Birmpas G, Markakis E:
Comparing approximate relaxations of envy-freeness.
Proceedings of the 27th International Joint Conference on Artificial Intelligence, 42-48, 2018.

\bibitem{AFS24}
Amanatidis G, Filos-Ratsikas A, Sgouritsa A:
Pushing the frontier on approximate EFX allocations.
Proceedings of the 25th ACM Conference on Economics and Computation, 1268-1286, 2024.

\bibitem{ALM25}
Amanatidis G, Lolos A, Markakis E, Turmel V: 
Online fair division for personalized 2-value instances.
Proceeding of the 18th International Symposium on Algorithmic Game Theory, Lecture Notes in Computer Science, 15953, 209-227, Springer, 2025.

\bibitem{AMN17}
Amanatidis G, Markakis E, Nikzad A, Saberi A:
Approximation algorithms for computing maximin share allocations. 
ACM Transactions on Algorithms, 13(4), Article 52, 2017.

\bibitem{AMN20}
Amanatidis G, Markakis E, Ntokos A:
Multiple birds with one stone: Beating 1/2 for EFX and GMMS via envy cycle elimination. 
Theoretical Computer Science, 841, 94-109, 2020.

\bibitem{ALM24}
Aziz H, Li B, Moulin H, Wu X, Zhu X:
Almost proportional allocations of indivisible chores: Computation, approximation and efficiency. 
Artificial Intelligence, 331, 104118, 2024

\bibitem{BGG22}
Banerjee S, Gkatzelis V, Gorokh A, Jin B:
Online Nash social welfare maximization with predictions.
Proceedings of the 33th Annual ACM-SIAM Symposium on Discrete Algorithms, 1-19, 2022.


\bibitem{BGGS21}
Baklanov A, Garimidi P, Gkatzelis V, Schoepflin D:
Achieving proportionality up to the maximin item with indivisible goods.
Proceedings of the 35th AAAI Conference on Artificial Intelligence, 5143-5150, 2021.

\bibitem{BS06}
Bansal N, Sviridenko M: 
The Santa Claus problem.
Proceedings of the 38th Annual ACM Symposium on Theory of Computing, 31-40, 2006.

\bibitem{BKP18}
Benade G, Kazachkov A M, Procaccia A D, Psomas C A: 
How to make envy vanish over time.
Proceedings of the 19th ACM Conference on Economics and Computation, 593-610, 2018.

\bibitem{BE05}
Borodin A, El-Yaniv R: 
Online Computation and Competitive Analysis. 
Cambridge University Press, 2005.

\bibitem{Budish11}
Budish E: 
The combinatorial assignment problem: Approximate competitive equilibrium from equal incomes. 
Journal of Political Economy, 119(6), 1061-1103, 2011.


\bibitem{CKMPSW19}
Caragiannis I, Kurokawa D, Moulin H, Procaccia AD, Shah N, Wang J:
The unreasonable fairness of maximum Nash welfare.
ACM Transactions on Economics and Computation, 7(3), Article 12, 2019.

\bibitem{CGM24}
Chaudhury B R, Garg J, Mehlhorn K: 
EFX exists for three agents. 
Journal of the ACM, 71(1), Article 4, 2024.

\bibitem{CVW94}
Chen B, van Vliet A, Woeginger G J: 
New lower and upper bounds for on-line scheduling. 
Operations Research Letters, 16(4), 221-230, 1994.

\bibitem{CFKNPT26}
Choo D, Fu W, Khu D, Neoh TY, Poon TY, Teh N: 
Approximate proportionality in online fair division.
Proceedings of the 43th International Conference on Machine Learning, 2026. Also on arXiv: 2508.03253. 

\bibitem{CFS17} 
Conitzer V, Freeman R, Shah N: 
Fair public decision making. 
Proceedings of the 18th ACM Conference on Economics and Computation, 629-646, 2017.

\bibitem{DFL82}
Deuermeyer B L, Friesen D K, Langston M A: 
Scheduling to maximize the minimum processor finish time in a multiprocessor system. 
SIAM Journal on Algebraic Discrete Methods, 3(2), 190-196, 1982.

\bibitem{ELL25}
Elkind E, Lam A, Latifian M, Neoh T Y, Teh N: 
Temporal fair division of indivisible items.
Proceedings of the 24th International Conference on Autonomous Agents and Multiagent Systems, 676-685, 2025.

\bibitem{Epstein18}
Epstein L: 
A survey on makespan minimization in semi-online environments. 
Journal of Scheduling, 21(3), 269-284, 2018.

\bibitem{FKT89} 
Faigle U, Kern W, Turan G: 
On the performance of on-line algorithms for partition problems. 
Acta Cybernetica, 9(2), 107-119, 1989.

\bibitem{FN22}
Feige U, Norkin A:
Improved maximin fair allocation of indivisible items to three agents. 
arXiv:2205.05363, 2022.

\bibitem{FW00}
Fleischer R, Wahl M: 
On‐line scheduling revisited. 
Journal of Scheduling, 3(6), 343-353, 2000.

\bibitem{Foley66}
Foley D K: 
Resource allocation and the public sector. 
Yale Economic Essays, 7, 45-98, 1966.

\bibitem{GMQ25}
Garg J, Murhekar A, Qin J: 
Constant-factor EFX exists for chores.
Proceedings of the 57th Annual ACM Symposium on Theory of Computing, 1580-1589, 2025.

\bibitem{GHSSY18}
Ghodsi M, HajiAghayi MT, Seddighin M, Seddighin S, Yami H:
Fair allocation of indivisible goods: Improvements. 
Mathematics of Operations Research, 2021, 46(3):1038-1053. Also on arXiv:1704.00222.

\bibitem{GLT25}
Gong M, Lin G, Tan Z:
Semi-online multiprocessor scheduling with known largest job processing time.
Proceedings of the 17th International Conference on Combinatorial Optimization and Applications, Lecture Notes in Computer Science, 15434, 213-225, Springer, 2025.

\bibitem{GRT00}
Gormley T, Reingold N, Torng E, Westbrook J: 
Generating adversaries for request-answer games.
Proceedings of the 11th annual ACM-SIAM Symposium on Discrete Algorithms, 564-565, 2000.

\bibitem{Graham66}
Graham R L: 
Bounds for certain multiprocessing anomalies. 
Bell System Technical Journal, 45(9), 1563-1581, 1966.

\bibitem{HT03}
He Y, Tan Z: 
Randomized on-line and semi-on-line scheduling on identical machines.
Asia-Pacific Journal of Operational Research, 20(1), 2003.


\bibitem{HS23}
Huang X, Segal-Halevi E:
A reduction from chores allocation to job scheduling.
Proceedings of the 24th ACM Conference on Economics and Computation, 908-908, 2023.


\bibitem{KKST97}
Kellerer H, Kotov V, Speranza M G, Tuza Z: 
Semi on-line algorithms for the partition problem. 
Operations Research Letters, 21(5), 235-242, 1997.

\bibitem{KKG15}
Kellerer H, Kotov V, Gabay M: 
An efficient algorithm for semi-online multiprocessor scheduling with given total processing time. 
Journal of Scheduling, 18(6), 623-630, 2015.

\bibitem{KM25}
Kobayashi Y, Mahara R: 
Proportional allocation of indivisible goods up to the least valued good on average. 
SIAM Journal on Discrete Mathematics, 39, 533-549, 2025.

\bibitem{LST90}
Lenstra J K, Shmoys D B, Tardos É: 
Approximation algorithms for scheduling unrelated parallel machines. 
Mathematical Programming, 46(1), 259-271, 1990.

\bibitem{LMM04}
Lipton R J, Markakis E, Mossel E, Saberi A: 
On approximately fair allocations of indivisible goods.
Proceedings of the 5th ACM Conference on Electronic Commerce, 125-131, 2004.

\bibitem{McNaughton59}
McNaughton R: 
Scheduling with deadlines and loss functions. 
Management Science, 6(1), 1-12, 1959.

\bibitem{MP25}
Melissourgos T, Protopapas N: 
Online EFX allocations with predictions. 
Proceedings of the 25th International Conference on Autonomous Agents and Multiagent Systems, 3519–3521, 2026.

\bibitem{Moulin19} 
H. Moulin: 
Fair division in the internet age.
Annual Review of Economics, 11, 407-441, 2019.

\bibitem{NPT25}
Neoh T Y, Peters J, Teh N: 
Online fair division with additional information. 
Proceedings of the 43th International Conference on Machine Learning, 2026. Also on arXiv:2505.24503.

\bibitem{PR20}
Plaut B, Roughgarden T: 
Almost envy-freeness with general valuations. 
SIAM Journal on Discrete Mathematics, 34(2), 1039-1068, 2020.

\bibitem{PW14}
Procaccia A D, Wang J: 
Fair enough: Guaranteeing approximate maximin shares.
Proceedings of the 15th ACM Conference on Economics and Computation, 675-692, 2014.

\bibitem{RC03}
Rudin III J F, Chandrasekaran R: 
Improved bounds for the online scheduling problem. 
SIAM Journal on Computing, 32(3), 717-735, 2003.

\bibitem{Steihaus48} 
Steihaus H: 
The problem of fair division. 
Econometrica, 16, 101–104, 1948.

\bibitem{STW25}
Song J, Tao B, Wang W, Zhang Y: 
Online MMS allocation for chores. 
arXiv:2507.14039, 2025.

\bibitem{TH07}
Tan Z, He Y:
Semi-online scheduling problems on two identical machines with inexact partial information.
Theoretical Computer Science, 377, 110-125, 2007.

\bibitem{TW07}
Tan Z, Wu Y: 
Optimal semi-online algorithms for machine covering. 
Theoretical Computer Science, 372(1), 69-80, 2007.


\bibitem{TZ13}
Tan Z, Zhang A: 
Online and semi-online scheduling.
Handbook of Combinatorial Optimization (2nd Edition). Springer, 2191-2252, 2013.


\bibitem{WW26}
Wang Y, Wei T: 
Online fair allocations with binary valuations and beyond.
Proceedings of the 40th AAAI Conference on Artificial Intelligence, 17267-17275, 2026.

\bibitem{Woeginger97}
Woeginger G J:
A polynomial-time approximation scheme for maximizing the minimum machine completion time. 
Operations Research Letters, 20(4), 149-154, 1997.


\bibitem{Wu05}
Wu B Y:
An analysis of the LPT algorithm for the max–min and the min–ratio partition problems. 
Theoretical Computer Science, 349(3), 407-419, 2005.


\bibitem{ZBW23}
Zhou S, Bai R, Wu X: 
Multi-agent online scheduling: MMS allocations for indivisible items.
Proceedings of the 40th International Conference on Machine Learning, 42506-4251, 2023.
\end{thebibliography}
\end{document}